\newtheorem{theorem}{Theorem}
\newtheorem{proposition}[theorem]{Proposition}%
\newtheorem{remark}{Remark}%
\newtheorem{definition}{Definition}%
\newtheorem{lemma}[theorem]{Lemma}
\newtheorem{corollary}[theorem]{Corollary}
\newcommand{\N}{{\mathbb N}}
\newcommand{\Z}{{\mathbb Z}}
\newcommand{\K}{\mathbb K}
\newcommand{\C}{{\mathcal C}}
\newcommand{\PAut}{\rm PAut}
\newcommand{\tq}{\; \mid \;}
\newcommand{\lcm}{\mathrm{lcm}}
\newcommand{\supp}{\rm supp}
\begin{document}

\title[Permutation decoding first-order GRM codes]{Permutation decoding of first-order Generalized Reed-Muller codes.\footnote{This work was partially supported by MINECO, project PID2020- 113206GB- I00/AEI/10.13039 /501100011033, and Fundaci\'{o}n S\'{e}neca of Murcia, project 22004/PI/22.}}

\author{Jos\'e Joaqu\'{i}n Bernal}\email{josejoaquin.bernal@um.es}
\author{Juan Jacobo Sim\'on}\email{jsimon@um.es}

\affil{\orgdiv{Departamento de Matem\'{a}ticas}, \orgname{Universidad de Murcia}, \orgaddress{\country{Spain}}}
\abstract{
In \cite{BS3} we describe a variation of the classical permutation decoding algorithm that can be applied to any binary affine-invariant code; in particular, it can be applied to first-order Reed-Muller codes successfully. In this paper we study how to implement it for the family of first-order Generalized Reed-Muller codes. Then, we give examples which show that we improve the number of errors we can correct in comparison with the known results for this family of codes. Finally, we deal, from a probabilistic point of view, with the problem of determining when the algorithm only needs to use a smaller PD-like set. }

\keywords{Reed-Muller codes, Algorithms, Permutation decoding}

\maketitle

\section{Introduction}

The family of (binary) Reed-Muller codes was introduced by D. E. Muller in 1954 \cite{Muller} and a specific decoding algorithm for them was presented by I. S. Reed in the same year \cite{Reed}. This family was generalized to other fields by Kasami, Lin and Peterson who introduced primitive Generalized Reed-Muller (GRM) codes in \cite{KLP}. Independently, in \cite{Weldon}, Weldon presented the nonprimitive generalization and the single-variable approach using the Mattson-Solomon polynomial. The reader may find an extensive explanation on both generalizations in \cite{DGM, handbook3} and  \cite{ handbook2}. In this paper we consider primitive GRM codes, specifically they are seen as affine-invariant codes in the algebra $\K G$, where $\K$ is the field with $q$ elements and $G$ is the abelian group of the field with $q^m$ elements, for some natural number $m$.

On the other hand, the permutation decoding was introduced by F. J. MacWilliams in \cite{macwilliams} and it is fully described  in \cite{handbook2} and \cite{MacSlo}. With a fixed information set for a given linear code, this technique uses a special set of its permutation automorphisms called $s$-PD-set, where $s$ is the number of errors we want to be corrected. Then, the idea of permutation decoding is to apply the elements of the  PD-set to the received vector until the errors are moved out of the fixed information set.

The problem of applying permutation decoding to GRM codes has been addressed for many authors earlier. Up to our knowledge, the best bounds for the number of corrected errors are in \cite{KMM2} and \cite{KMM3}. In \cite{KMM2} the authors mention that the group of translations in $\K^m$ is an $s$-PD-set for the GRM code of order $v$ where 
	$$s=\min\left\{\left\lfloor\frac{q^m-1}{f_{v,m,q}} \right\rfloor, \left\lfloor\frac{d_{v,m,q}-1}{2}\right\rfloor\right\}$$
	and $f_{v,m,q}, d_{v,m,q}$ represent the dimension and the minimum distance of the code respectively. In \cite{KMM3} the authors explain that for the first-order GRM codes, $s$-PD-sets of size $s+1$ exist for 
	$$1\leq s\leq \left\lfloor\frac{q^m}{m+1}\right\rfloor-1.$$
	
In \cite{BS} we present a description of an information set for first and second-order Reed-Muller codes only in terms of its essential parameters. Later, in \cite{BS3} we introduce a modification of the classical permutation decoding algorithm which is valid for any affine-invariant code and we show that it can be apply successfully in the case of first order Reed-Muller codes with respect to the mentioned information sets. In this paper we see that the construction of the information sets can be generalized to first-order GRM codes and then we show how to apply the modified permutation decoding algorithm to that  family of codes in order to improve the bounds cited above.

In Section \ref{Preliminaries} we include the basic notation and we establish the general ambient space for our work. In Section \ref{infosets} we show that the construction given in \cite{BS} can be generalized to first-order GRM codes and that we can take a suitable decomposition of $q^m-1$ for our purposes. Then, in Section \ref{NewPD} we recall the modified permutation decoding algorithm presented in \cite{BS3} and we set the main result for first-order GRM codes. In Section \ref{Examples} we present the results derived from the application of the permutation decoding algorithm for some values of $q$ and $m$. Moreover, we show that they improve the mentioned bounds. Finally, in the last section, we study, from a probabilistic point of view, when we can apply efficiently a smaller PD-like set. Concretely, we include some values which show that in some cases we can expect to do it with probability close to 1.

\section{Preliminaries}\label{Preliminaries}

  We are considering the family of Reed-Muller codes as a subfamily of the so-called affine-invariant codes (see, for instance, \cite{handbook2},\cite{handbook3},\cite{Charpin}). 
	
	Let $\K$ be a field with $q$ elements, $q$ a power of a prime number $p$. Our ambient space will be the group algebra $\K G$, where $G$ is the additive group of the field with $q^m$ elements, for some $m\in \N$. From now on we fixed $m$ and we define $n=q^m-1$. Observe that $G$ is an elementary abelian group of order $|G|=q^m$ and $G^*=G\setminus\{0\}$ is a cyclic group. We are also fixing $\alpha\in G^*$ a generator element, that is, $\langle \alpha \rangle=G^*$. 
	Then, we write the elements in $\K G$ as
	\begin{equation}\label{polynomial}
	 b X^0+\sum\limits_{i=0}^{n-1} a_i X^{\alpha^i}\quad (a_i,b\in \K).
	\end{equation}

\begin{definition}
\begin{enumerate}[a)]
	\item A code in $\K G$ is any ideal $\C\leq \K G$.
	\item A code $\C\leq \K G$ is an extended cyclic code if for any $b X^0+\sum\limits_{i=0}^{n-1} a_i X^{\alpha^i}\in \C$ one has that $b X^0+\sum\limits_{i=0}^{n-1} a_i X^{\alpha^{(i+1)}}\in \C$ and $b +\sum\limits_{i=0}^{n-1} a_i =0\in \K$.
\end{enumerate} 
\end{definition}

For any extended cyclic code $\C\leq \K G$ we denote by $\C^*$ the punctured code at the position $X^0$. Then, $\C^*$ is a \textit{cyclic} code in the sense that	it is the projection to $\K G^*$ of the image of a cyclic code via the map 
 \begin{eqnarray}\label{inmersionciclicos}
  \nonumber \K[X]/\langle X^n-1\rangle &\longrightarrow& \K G\\
   \sum\limits_{i=0}^{n-1}a_{i}X^i&\hookrightarrow& \left(-\sum\limits_{i=0}^{n-1} a_i\right)X^0+\sum\limits_{i=0}^{n-1}a_i X^{\alpha^i},	  
  \end{eqnarray}
	
where $\alpha$ is the fixed $n$-th root of unity and $\K[X]/\langle X^n-1\rangle$ is the quotient algebra of $\K[X]$, the polynomials with coefficients in $\K$.  
	
\begin{definition}	Let $S_G$ denote the group of automorphisms of $G$. Then, we see $S_G$ acting on $\K G$ via 
		$$\tau\left(b X^0+\sum\limits_{i=0}^{n-1} a_i X^{\alpha^i}\right)=b X^{\tau(0)}+\sum\limits_{i=0}^{n-1} a_i X^{\tau(\alpha^i)}$$
	for any $\tau \in S_G$. In this context, we define 
	$$\PAut(\C)=\{\tau\in S_{G}\mid \tau(\C)=\C\}.$$
\end{definition}

	It is relevant to note that we may also consider $S_{G^*}$, the group of automorphisms of $G^*$, and then we may define $\PAut(\C^*)$. Moreover, we can identify any $\tau\in S_{G^*}$ with the corresponding automorphism in $S_G$ that fixes the position $X^0$ and we may write $\PAut(\C^*)\subseteq \PAut(\C)$.

	Now, we may introduce the family of affine-invariant codes.	

\begin{definition}
	 We say that a code $\C\leq \K G$ is an affine-invariant code if it is an extended cyclic code and ${\rm GA}(G,G)\subseteq \PAut(\C)$ where
	$${\rm GA}(G,G)=\{x\mapsto ax+b\mid x\in\K G, a\in G^*, b\in G\}.$$
	\end{definition}
	
We are defining the family of Generalized Reed-Muller (GRM) codes in terms of its defining set as affine-invariant codes. So, let us give the definitions of defining set and $q$-weight.
\begin{definition}  Let $\C\leq\K G$ be an affine-invariant code. For any $s\in\{0,\dots,n=q^m-1\}$ we consider the $\K$-linear map $\phi_s:\K G\rightarrow G$ given by
\begin{equation*}
 \phi_s\left(b X^0+\sum\limits_{i=0}^{n-1} a_i X^{\alpha^i}\right)=0^s+\sum\limits_{i=0}^{n-1} a_i \alpha^{is}
\end{equation*}
where we assume $0^0=1\in\K$ by convention. Then the set 
 $$D(\C)=\{i\mid \phi_i(x)=0 \text{ for all } x\in \C\}$$
 is called the defining set of $\C$.
\end{definition}

It is well known that any affine-invariant code is totally determined by its defining set (see, for instance,\cite{handbook2}).

\begin{definition}
For any natural number $k$ its $q$-ary expansion is the sum 
$$\sum_{r\geq 0} k_r q^r=k$$
 with $k_r\in \{0,1,\dots,q-1\}$. The $q$-weight of $k$ is ${\rm wt}_q(k)=\sum_{r\geq 0}k_r$. 
\end{definition}

\begin{definition}\label{defRM}
 Let $0< \rho\leq m(q-1)$. The Generalized Reed-Muller (GRM) code of order $\rho$ and length $q^m$, denoted by $R_q(\rho,m)$, is the affine-invariant code in $\K G$ with defining set 
 $$D(R_q(\rho,m))=\{0\leq i<q^m-1 \mid {\rm wt}_q(i)< m(q-1)-\rho\}.$$
\end{definition}

In this paper we deal with first-order GRM-codes, that is, the codes with defining set
$$D(R_q(1,m))=\{0\leq i<q^m-1 \mid {\rm wt_q}(i)< m(q-1)-1\}.$$

The following proposition shows some well-known results about first-order GRM-codes (see, for instance, \cite{handbook3}).

\begin{proposition}For any value of $q$ and $m$ one has that
\begin{enumerate}
	\item Dimension: ${\rm dim}_\K(R_q(1,m))=m+1$.
	\item Minimum distance: $d(R_q(1,m))=q^{m-1}(q-1)$.
	\item $R_q(1,m)^\bot=R_q(m(q-1)-2,m)$, where the dual code $R_q(1,m)^\bot$ is defined as usual.
\end{enumerate}
\end{proposition}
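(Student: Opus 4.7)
For (1), I would use the general identity $\dim_\K(\C) = q^m - |D(\C)|$ valid for any affine-invariant code $\C \leq \K G$. Thus $\dim(R_q(1,m))$ equals the number of $i \in \{0,1,\ldots,q^m-1\}$ with ${\rm wt}_q(i) \geq m(q-1) - 1$. Splitting by the exact weight, there is a unique index of weight $m(q-1)$, namely $i = q^m - 1$ with all $q$-ary digits equal to $q-1$, and exactly $m$ indices of weight $m(q-1) - 1$, one for each position carrying the single digit $q-2$ while the other $m-1$ digits equal $q-1$. Altogether this gives $m+1$.

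For (2), which I expect to be the main obstacle, the BCH bound alone is not tight: the smallest integer of $q$-weight $m(q-1) - 1$ is obtained by placing the digit $q-2$ in the top position, giving $i = q^{m-1}(q-1) - 1$, so the longest initial run inside $D(R_q(1,m))$ has length only $q^{m-1}(q-1) - 2$, short by one of the bound we want. I would therefore rely on the description of $R_q(1,m)$ as the code of evaluations of affine $\K$-forms on the $\K$-vector space $G$: every nonzero affine form $c + L(x)$, with $L \colon G \to \K$ a nonzero $\K$-linear functional, vanishes on the affine hyperplane $\{L(x) = -c\}$ of $G$, which consists of exactly $q^{m-1}$ points. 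Consequently every nonzero codeword has weight exactly $q^m - q^{m-1} = q^{m-1}(q-1)$, establishing both inequalities at once.

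For (3), I would invoke the standard rule for the defining set of the dual of an affine-invariant code: with the usual convention at the extended position, $i \in D(\C^\perp)$ if and only if $q^m - 1 - i \notin D(\C)$. Combined with the identity ${\rm wt}_q(n-i) = m(q-1) - {\rm wt}_q(i)$, valid for $1 \leq i \leq n-1 = q^m - 2$, this translates membership $i \in D(R_q(1,m)^\perp)$ into the condition ${\rm wt}_q(i) < 2$. By Definition~\ref{defRM} the latter set is precisely $D(R_q(m(q-1) - 2, m))$, which completes the identification.
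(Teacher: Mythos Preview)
The paper does not actually prove this proposition: it simply records it as ``well-known'' with a reference to \cite{handbook3} and moves on. Your proposal therefore supplies strictly more than the paper does, and the arguments you sketch are the standard ones.

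One small point on part~(2): after restricting to affine forms $c+L(x)$ with $L\neq 0$ you conclude that ``every nonzero codeword has weight exactly $q^m-q^{m-1}$'', but this overlooks the nonzero constants ($L=0$, $c\neq 0$), which have weight $q^m$. Of course $q^m\geq q^{m-1}(q-1)$, so the minimum distance claim survives; you just need one extra sentence to cover that case. Also implicit in your argument is the identification of $R_q(1,m)$ (defined here via its defining set) with the evaluation code of affine $\K$-linear forms on $G$; this is standard, but strictly speaking it is an additional ingredient beyond the paper's Definition~\ref{defRM}.
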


We denote by $R^*_q(1,m)$ the cyclic punctured code at the position $X^0$ of $R_q(1,m)$.

\section{Information sets from defining sets for first-order Generalized Reed-Muller codes}\label{infosets}

In \cite{BS} we showed how to obtain an information set for (binary) first-order Reed-Muller codes only in terms of $m$ and a decomposition of $n=q^m-1$. In this section we fix the general notation and we include the main result.

It is clear that $\gcd(q,n)=1$ so, as in \cite{BS}, we only need to impose the following restrictions:
\begin{equation}\label{condFirstOrder}
 n=r_1\cdot r_2,\qquad \gcd(r_1,r_2)=1 \qquad \text{ and } \qquad  r_1, r_2>1
\end{equation}
Then, from now on, we fix an arbitrary isomorphism $\varphi:\Z_n\longrightarrow \Z_{r_1}\times\Z_{r_2}$.

Let us give the definition of information set in the ambient space $\K G$.

\begin{definition}
 A set $I\subseteq \{0,\alpha^0,\dots,\alpha^{n-1}\}=G$ is an information set for a code $\C\leq \K G$, with dimension $k$, if $|I|=k$ and the set of the projections of the codewords of $\C$ onto the positions in $I$ is equal to $\K^{|I|}$.
\end{definition}
It is important to note that for any affine-invariant code we may also define the notion of information set for $\C^*$ as a set contained in $G^*=\{\alpha^0,\dots, \alpha^{n-1}\}$. Clearly, an information set for $\C^*$ is an information set for $\C$, but if $I\subseteq G$ is an information set for $\C$ and $0\in I$ then $I':=I\setminus\{0\}$ is not an information set for $\C^*$. Indeed, the proof of the following main result, analogous to that given in \cite{BS}, is based on constructing an information set for $R^*_q(1,m)$ that contains $0$, so $I'=I\setminus\{0\}$ is not an information set for $R^*_q(1,m)$.

\begin{theorem}[see \cite{BS}, Theorem 34] \label{teoremainfosetprimerorden}
  Let $q$ and $m$ satisfying conditions (\ref{condFirstOrder}). Let $n=q^m-1=r_1\cdot r_2$ and denote by $a$ the order of $q$ modulo $r_1$ ($a= Ord_{r_1}(q)$). Then the set $I=\{0,\alpha^i\mid i\in \varphi^{-1}\left(\Gamma\right)\}$ where 
 $$\Gamma=\Gamma(\C)=\left\{(i_1,i_2)\in\Z_{r_1}\times\Z_{r_2} \tq 0\leq i_1< a, 0\leq i_2<\frac{m}{a}\right\},$$
 is an information set for $R_q(1,m)$. 
\end{theorem}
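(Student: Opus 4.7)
The plan is to show that the projection $\pi_I\colon R_q(1,m) \to \K^{|I|}$ onto the coordinates indexed by $I$ is an isomorphism. Since $|\Gamma| = a \cdot (m/a) = m$, we have $|I| = m+1 = \dim_\K R_q(1,m)$, so only injectivity has to be proved. Assume $c = b X^0 + \sum_{j=0}^{n-1} a_j X^{\alpha^j}$ lies in $R_q(1,m)$ and satisfies $\pi_I(c) = 0$, that is, $b = 0$ and $a_j = 0$ for every $j \in \varphi^{-1}(\Gamma)$; the goal is to deduce $c = 0$.

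My next step is to use the trace representation of first-order GRM codes, which can be obtained via the Mattson--Solomon transform. The complement of $D(R_q(1,m))$ inside $\{0,\dots,n-1\}$ consists of $\{0\}$ together with the single $q$-cyclotomic coset $\{n-q^s : 0 \le s \le m-1\}$ of size $m$, and exploiting the Frobenius relation inside the nontrivial coset (together with the parity-check identity $b + \sum_j a_j = 0$ and $\sum_{j=0}^{n-1}\alpha^{j} = 0$) yields
\begin{equation*}
	b = A_0, \qquad a_j = A_0 + \mathrm{Tr}_{\mathbb{F}_{q^m}/\K}(\eta\,\alpha^{j}), \qquad A_0 \in \K,\; \eta \in \mathbb{F}_{q^m}.
\end{equation*}
The hypothesis $b = 0$ forces $A_0 = 0$, and the remaining equations collapse to $\mathrm{Tr}(\eta\,\alpha^{j}) = 0$ for every $j \in \varphi^{-1}(\Gamma)$. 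By non-degeneracy of the trace pairing, the theorem is now reduced to showing that $\{\alpha^{j} : j \in \varphi^{-1}(\Gamma)\}$ is a $\K$-basis of $\mathbb{F}_{q^m}$.

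For this last step I would unravel $\varphi$ via CRT. Write $\varphi^{-1}(i_1,i_2) \equiv i_1 u + i_2 v \pmod{n}$, where $u,v \in \Z_n$ are the preimages of $(1,0)$ and $(0,1)$; a direct computation with $\gcd$'s shows that $u$ has order $r_1$ and $v$ has order $r_2$ in $\Z_n$. Setting $\beta = \alpha^{u}$ and $\gamma = \alpha^{v}$, the family in question equals $\{\beta^{i_1}\gamma^{i_2} : 0 \le i_1 < a,\ 0 \le i_2 < m/a\}$, with $\beta, \gamma \in \mathbb{F}_{q^m}^{*}$ of respective orders $r_1$ and $r_2$. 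Since $a = \mathrm{Ord}_{r_1}(q)$, $\beta$ is a primitive $r_1$-th root of unity generating $\mathbb{F}_{q^a}$ over $\K$, so $\{1,\beta,\dots,\beta^{a-1}\}$ is a $\K$-basis of $\mathbb{F}_{q^a}$. The main obstacle is then the tower-of-fields computation: letting $t = \mathrm{Ord}_{r_2}(q)$, one has $\mathbb{F}_{q^a}(\gamma) = \mathbb{F}_{q^{\lcm(a,t)}}$, and the CRT identification $(\Z_n)^{*} \cong (\Z_{r_1})^{*} \times (\Z_{r_2})^{*}$, combined with the fact that $q$ has order exactly $m$ in $(\Z_n)^{*}$, forces $\lcm(a,t) = m$; hence $[\mathbb{F}_{q^a}(\gamma):\mathbb{F}_{q^a}] = m/a$ and $\{1,\gamma,\dots,\gamma^{m/a-1}\}$ is an $\mathbb{F}_{q^a}$-basis of $\mathbb{F}_{q^m}$. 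Multiplying the two bases then produces the required $\K$-basis of $\mathbb{F}_{q^m}$. The delicate bookkeeping I would still have to check is that the product parametrisation $\{\beta^{i_1}\gamma^{i_2}\}$ really gives $m$ distinct elements, i.e., that there are no accidental coincidences among the exponents $i_1 u + i_2 v \pmod{n}$.
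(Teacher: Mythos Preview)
Your argument is correct and essentially complete. The route, however, differs from the paper's: rather than invoking the two-dimensional abelian-code machinery of \cite{BS} (the paper itself gives no proof, merely asserting that the arguments of \cite{BS} carry over verbatim to general $q$), you use the explicit trace description $c(x)=A_0+\mathrm{Tr}(\eta x)$ of first-order GRM codewords and reduce injectivity of $\pi_I$ to the statement that $\{\beta^{i_1}\gamma^{i_2}:0\le i_1<a,\ 0\le i_2<m/a\}$ is a $\K$-basis of $\mathbb{F}_{q^m}$, which you dispatch by a clean tower-of-fields argument using $\lcm(a,t)=m$. Your approach is more elementary and fully self-contained for this particular theorem; the abelian-code framework of \cite{BS} buys generality (it is set up to handle second-order Reed--Muller codes as well), at the cost of more infrastructure.

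Two minor remarks. First, the singleton $\{0\}$ you list in the complement of the defining set really pertains to the \emph{punctured} code $R_q^*(1,m)$: under the paper's convention $0\in D(R_q(1,m))$ since $\mathrm{wt}_q(0)=0<m(q-1)-1$. This does not affect your trace formula, which is the standard affine-function description and is correct regardless. Second, the ``delicate bookkeeping'' you flag --- distinctness of the $\beta^{i_1}\gamma^{i_2}$ --- is not actually delicate: it follows either from the $\K$-linear independence you have already established, or directly from CRT, since $i_1u+i_2v\equiv i_1\pmod{r_1}$ and $\equiv i_2\pmod{r_2}$, with $a<r_1$ and $m/a\le t<r_2$.
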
 

The proof depends on the identification of the punctured cyclic code as a ``two dimensional abelian code'' and the computation of the parameters that define $\Gamma(\C)$ (see \cite{BS3}) only in terms of the parameters $m$ and $a$. Those arguments are independent of the value $q$ so they work as in \cite{BS}. 
\medskip

\begin{remark}
In \cite{BS} we include a table showing some values of $m$ for which it is possible to find at least one suitable decomposition of $n=2^m-1$. Although there exist such decompositions for most of the values of $m$, it is not always true, for instance in the case of Mersenne prime numbers.  

For $q=3,4,5,7,8$ we have checked computationally that we will always be able to find a suitable decompositions of $n=q^m-1$ where $m\in\{2,\dots,100\}$, with the unique exception of $q=3, m=2$.

For us, it is an open problem to determine a general condition on the values of $q$ and $m$ to get a suitable decomposition.
\end{remark}

We can establish the following two technical results that yield a simplification of the information set $\Gamma(\C)$ which is essential to apply the permutation decoding algorithm. The first one is easy to prove.

\begin{lemma}\label{descomposicion1}
 Let $m,\delta\in \N$ and $q$ a power of a prime number. Then $m|\delta$ if and only if $q^m-1|q^\delta-1$.  
\end{lemma}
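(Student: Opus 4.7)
The plan is to prove both directions via the standard factorization of $x^k - 1$ combined with the Euclidean division algorithm.

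For the forward implication, I would assume $m \mid \delta$ and write $\delta = mk$ for some $k \in \N$. Then I would use the identity
\[
q^{\delta} - 1 = (q^{m})^{k} - 1 = (q^{m} - 1)\bigl((q^{m})^{k-1} + (q^{m})^{k-2} + \cdots + q^{m} + 1\bigr),
\]
which immediately gives $q^{m} - 1 \mid q^{\delta} - 1$. This handles one direction in a single line.

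For the converse, I would proceed by the Euclidean division of $\delta$ by $m$: write $\delta = mk + r$ with $0 \leq r < m$. The key computation is then
\[
q^{\delta} - 1 = q^{r}\bigl(q^{mk} - 1\bigr) + (q^{r} - 1).
\]
By the forward direction already proved, $q^{m} - 1$ divides $q^{mk} - 1$, and by hypothesis $q^{m} - 1$ divides $q^{\delta} - 1$. Therefore $q^{m} - 1$ must divide $q^{r} - 1$. Since $0 \leq r < m$ and $q \geq 2$, we have $0 \leq q^{r} - 1 < q^{m} - 1$, so the only way $q^{m} - 1$ can divide $q^{r} - 1$ is if $q^{r} - 1 = 0$, forcing $r = 0$ and hence $m \mid \delta$.

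There is no real obstacle here: the argument is entirely elementary, which is consistent with the authors' own remark that this lemma is easy to prove. The only minor thing to watch is the degenerate case $q = 1$, but since $q$ is required to be a power of a prime we automatically have $q \geq 2$, so the bound $q^{r} - 1 < q^{m} - 1$ for $r < m$ is strict and the conclusion follows.
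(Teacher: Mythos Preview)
Your proof is correct and is the standard elementary argument. The paper does not actually give a proof of this lemma---it simply states that it ``is easy to prove''---so there is nothing to compare against; your argument is precisely the kind of routine verification the authors had in mind.
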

	%

\begin{lemma}\label{descomposicion2}
 Let  $m,\delta\in \N$ and $q$ a power of a prime number. Suppose that $q^m-1=r_1\cdot r_2$ with $\gcd(r_1,r_2)=1, r_1,r_2>1$. Then either $Ord_{r_1}(q)=m$ or $Ord_{r_2}(q)=m$.
\end{lemma}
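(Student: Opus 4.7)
The plan is to argue by contradiction. Suppose $a_1 := Ord_{r_1}(q) < m$ and $a_2 := Ord_{r_2}(q) < m$. Since $r_i \mid q^m - 1$, the multiplicative order $a_i$ divides $m$, so we may write $a_i = m/d_i$ with integers $d_i \geq 2$. The goal is to show that these assumptions, combined with the coprimality hypothesis $\gcd(r_1, r_2) = 1$, force an impossible size inequality.

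The first substantive step is to compute $L := \lcm(a_1, a_2)$. From $a_i \mid m$ we immediately get $L \mid m$. For the reverse, the standard fact $q^{a_i} - 1 \mid q^L - 1$ yields $r_i \mid q^L - 1$; coprimality of $r_1$ and $r_2$ then gives $q^m - 1 = r_1 r_2 \mid q^L - 1$, and Lemma \ref{descomposicion1} produces $m \mid L$. Hence $L = m$, which rewritten in terms of the $d_i$ reads $m / \gcd(d_1, d_2) = m$, i.e., $\gcd(d_1, d_2) = 1$.

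Two coprime integers both at least $2$ cannot both equal $2$, so $(d_1 - 1)(d_2 - 1) \geq 2$, which rearranges to $d_1 d_2 > d_1 + d_2$, that is, $a_1 + a_2 < m$. On the other hand, $r_i \mid q^{a_i} - 1$ together with $\gcd(r_1, r_2) = 1$ yields $r_1 r_2 \mid (q^{a_1} - 1)(q^{a_2} - 1)$, so
\begin{equation*}
q^m - 1 \;\leq\; (q^{a_1} - 1)(q^{a_2} - 1) \;<\; q^{a_1 + a_2} \;\leq\; q^{m-1},
\end{equation*}
contradicting the elementary bound $q^m - 1 \geq q^{m-1}$ for $q \geq 2$.

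The delicate point, and the main obstacle, is precisely the strict inequality $a_1 + a_2 < m$. The weaker bound $a_1, a_2 < m$ alone is insufficient: without coprimality of $r_1$ and $r_2$ one could a priori have $a_1 = a_2 = m/2$ with $a_1 + a_2 = m$, and the final size argument would collapse. So the real work is to trace the hypothesis $\gcd(r_1, r_2) = 1$ through the chain $r_1 r_2 \mid q^L - 1 \Rightarrow L = m \Rightarrow \gcd(d_1, d_2) = 1$ and exploit it exactly where it rules out the degenerate configuration $d_1 = d_2 = 2$.
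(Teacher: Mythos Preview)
Your proof is correct. Both your argument and the paper's rest on the same two ingredients: (i) $\lcm(a_1,a_2)=m$, obtained from $r_1r_2\mid q^{\lcm(a_1,a_2)}-1$ together with Lemma~\ref{descomposicion1}, and (ii) the size bound $q^m-1=r_1r_2\le (q^{a_1}-1)(q^{a_2}-1)$ coming from $r_i\mid q^{a_i}-1$ and $\gcd(r_1,r_2)=1$. The paper applies (ii) first to deduce $m<a_1+a_2$, and then combines this with $\lcm(a_1,a_2)\cdot\gcd(a_1,a_2)=a_1a_2$ to force $\gcd(a_1,a_2)=\min(a_1,a_2)$, whence $\max(a_1,a_2)=m$ directly. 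You instead parametrise $a_i=m/d_i$, extract from (i) the identity $\lcm(m/d_1,m/d_2)=m/\gcd(d_1,d_2)$ to get $\gcd(d_1,d_2)=1$, and under the contradiction hypothesis $d_i\ge 2$ obtain the sharp bound $a_1+a_2\le m-1$; then (ii) yields the impossible inequality $q^m-1<q^{m-1}$. Your route is more symmetric and avoids the case distinction $a_1=a_2$ versus $a_1<a_2$, while the paper's argument is direct rather than by contradiction; the underlying content is the same.
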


\begin{proof}
 Let us denote $Ord_{r_1}(q)=a, Ord_{r_2}(q)=b$ and $\mu=\lcm(a,b), d=\gcd(a,b)$.

 Since $q^m\equiv 1 (\text{mod } r_i)$, for $i=1,2$, we have that $a,b|m$ and then $\mu|m$. On the other hand, $q^\mu\equiv 1 (\text{mod } r_i), i=1,2,$ which implies $r_1,r_2| q^\mu-1$, and so $q^m-1=r_1\cdot r_2|q^\mu-1$ because $r_1,r_2$ are coprime. By the previous lemma we have that $m\leq \mu$, and so $m=\mu$.

Now, assume w.l.o.g. that $a\leq b$. In case $a=b$ we are done because this implies $a=b=\mu=m$. So, let us suppose that $a<b$; we are proving that $b=m$.

Since $r_1|q^a-1$ and $r_2|q^b-1$ then $r_1\cdot r_2|(q^a-1)\cdot (q^b-1)$ and so
$$q^m-1=r_1\cdot r_2\leq (q^a-1)\cdot (q^b-1)=q^{a+b}+1-(q^a+q^b)<q^{a+b}-1,$$
note $q^a+q^b>2$ for any value of $q$ because $a,b>0$. This implies that $m<a+b$.

Finally,
$$\left(\frac{a}{d}\right)b<a+b\rightarrow \left(\frac{a}{d}-1\right)b<a$$
where we have used that $\mu\cdot d=a\cdot b$. Since $a<b$, we conclude that $a=d$ and therefore $b=m$.
\end{proof}

The following corollary gives the information set that we are interested in.

\begin{corollary}\label{informationsetFisrtGRM}
 Let $q$ and $m$ satisfying conditions (\ref{condFirstOrder}). Then the set $I=\{0,\alpha^i\mid i\in \varphi^{-1}\left(\Gamma\right)\}$ where 
 \begin{equation}\label{infosetfirstorder}
\Gamma=\Gamma(\C)=\left\{(i_1,i_2)\in\Z_{r_1}\times\Z_{r_2} \tq 0\leq i_1< m, 0\leq i_2<1\right\},
\end{equation}
 is an information set for $R_q(1,m)$. 

\end{corollary}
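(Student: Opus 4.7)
The plan is to show that the corollary is an essentially immediate consequence of Theorem~\ref{teoremainfosetprimerorden} combined with Lemma~\ref{descomposicion2}, once one observes that the roles of $r_1$ and $r_2$ in the decomposition $n=r_1\cdot r_2$ are symmetric and can be exchanged without loss.

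First I would invoke Lemma~\ref{descomposicion2}: since $n=q^m-1=r_1\cdot r_2$ with $\gcd(r_1,r_2)=1$ and $r_1,r_2>1$, at least one of $Ord_{r_1}(q)$ or $Ord_{r_2}(q)$ equals $m$. After possibly relabeling the two factors in the decomposition (i.e.\ composing the isomorphism $\varphi:\Z_n\to\Z_{r_1}\times\Z_{r_2}$ with the swap automorphism $(x_1,x_2)\mapsto(x_2,x_1)$ if necessary), we may assume
\begin{equation*}
a:=Ord_{r_1}(q)=m.
\end{equation*}

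Next I would simply instantiate Theorem~\ref{teoremainfosetprimerorden} with this value of $a$. The theorem guarantees that $I=\{0,\alpha^i\mid i\in\varphi^{-1}(\Gamma(\C))\}$ is an information set for $R_q(1,m)$, where
\begin{equation*}
\Gamma(\C)=\left\{(i_1,i_2)\in\Z_{r_1}\times\Z_{r_2}\tq 0\leq i_1<a,\ 0\leq i_2<\tfrac{m}{a}\right\}.
\end{equation*}
Substituting $a=m$ gives the upper bounds $0\leq i_1<m$ and $0\leq i_2<1$, which is exactly the set $\Gamma(\C)$ stated in~(\ref{infosetfirstorder}). Hence $I$ is an information set for $R_q(1,m)$, as claimed.

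There is really no obstacle here beyond the bookkeeping observation that the labels on the two coprime factors of $n$ are symmetric, so that Lemma~\ref{descomposicion2} delivers precisely the hypothesis needed to collapse the second coordinate to a single value. The only conceptual point worth underlining in the write-up is that the freedom to rename $r_1,r_2$ is built into the setup (the decomposition and the isomorphism $\varphi$ are chosen by the user), so the assumption $a=m$ may be taken without loss of generality.
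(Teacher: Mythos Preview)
Your proposal is correct and matches the paper's own proof, which simply reads ``It follows from Lemma~\ref{descomposicion2} and Theorem~\ref{teoremainfosetprimerorden}.'' Your explicit remark about relabeling $r_1$ and $r_2$ (so that $Ord_{r_1}(q)=m$) is the only piece of bookkeeping the paper leaves implicit, and it is both harmless and helpful to spell out.
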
 

\begin{proof}
It follows from Lemma \ref{descomposicion2} and Theorem \ref{teoremainfosetprimerorden}.
\end{proof}

\section{Permutation decoding for first-order GRM codes}\label{NewPD}

  \subsection{Permutation decoding for affine invariant codes.}
	
	We are going to use the modified permutation decoding algorithm that we presented in \cite{BS3}. It is valid for any affine-invariant code and, in particular, for the family we are interested in: first-order GRM codes. (Note that the cardinality of the base field it is not relevant in the algorithm.)  
	
	We consider an affine-invariant code $\C\leq \K G$, with error correction capability $t$, and we denote by $\C^*$ its cyclic punctured code at the position $X^0$.

\begin{definition}
 Let $J\subseteq G^*=\{\alpha^i\mid i=0,\dots,n-1\}$ be an arbitrary set. 
 A set $P\subseteq \PAut(\C^*)\subseteq S_{G^*}$ is an $s$-PD-like set for  $\C$ with respect to $J$ if any $s$ positions in $G^*$ are moved out of $J$ by at least one element in $P$. In case $s=t$ we say that $P$ is a PD-like set. 
\end{definition}

As in the classical permutation decoding algorithm we need the following well-known result.

\begin{theorem}(\cite{MacSlo})\label{PD}
	 Let $H$ be a parity check matrix for $\C$ in standard form with respect to an information set $I$. Then, the information symbols of $r$ are correct if and only if $\omega(H\cdot r^T)\leq t$, the error correction capability of $\C$.
	\end{theorem}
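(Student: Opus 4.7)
The plan is to prove both implications by writing the received word as $r = c + e$ with $c \in \C$ and $\omega(e) \leq t$, and then exploiting the structure of $H$ in standard form. Since $I$ is the information set used to put $H$ in standard form, after reordering coordinates we may assume $H = [A \mid I_{n-k}]$, where the last $n-k$ columns correspond to the check positions $G\setminus I$. Splitting $e = (e_I, e_C)$ according to this partition, the key identity is
$$H\cdot r^{T} \; = \; H\cdot e^{T} \; = \; A\cdot e_{I}^{T} + e_{C}^{T},$$
which holds because $H c^T = 0$.

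For the forward direction, I would simply observe that if the information symbols of $r$ agree with those of $c$, then $e_I = 0$, so the syndrome reduces to $e_C^T$, whose weight is at most $\omega(e) \leq t$.

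The reverse direction is where the standard form is used essentially. Assuming $\omega(H\cdot r^{T}) \leq t$, I would construct the vector $y \in \K^n$ that is zero on the coordinates in $I$ and whose restriction to $G\setminus I$ equals $H\cdot r^T$. By the shape of $H$ we immediately get $H y^T = H r^T$ and $\omega(y) \leq t$, so $r - y \in \C$. Then $c$ and $r-y$ are both codewords within Hamming distance $t$ of $r$, and the triangle inequality together with $d(\C) \geq 2t+1$ forces $c = r - y$, hence $e = y$. Since $y$ vanishes on $I$ by construction, the information symbols of $r$ coincide with those of $c$.

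The main obstacle, modest as it is, lies in this reverse direction: one needs to lift the syndrome to an error vector of weight at most $t$ supported outside $I$, and it is precisely the standard-form hypothesis $H = [A \mid I_{n-k}]$ that makes this lift immediate (the last $n-k$ columns being the identity let us read off the support and weight of $y$ at once). Without the standard form one would have to solve a linear system and control the weight of the solution separately, which has no a priori guarantee. The remainder of the argument is routine minimum-distance bookkeeping.
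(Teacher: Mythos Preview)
The paper does not supply its own proof of this theorem; it merely quotes the result from MacWilliams--Sloane~\cite{MacSlo} and uses it as a black box in Algorithm~I. Your argument is correct and is precisely the standard one: the forward direction is immediate from $e_I=0$, and in the reverse direction lifting the syndrome to a vector $y$ supported on the check positions (which the identity block in $H$ makes trivial) and invoking the minimum-distance bound $d(\C)\geq 2t+1$ to force $e=y$ is exactly how the proof goes in the reference. One cosmetic remark: in this paper $n=q^m-1$ while the code length is $q^m$, so the redundancy is $q^m-k$ rather than $n-k$; this is purely notational and does not affect your reasoning.
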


The following automorphisms of $G$ play an essential role in the algorithm. By definition of affine-invariant code it is clear that they belong to $\PAut(\C)$.

\begin{definition}
  For any $k=0,1,\dots,n-1$ let $\sigma_k$ the automorphism of $G$ given by 	
	$$\sigma_k\left(b X^0+\sum\limits_{i=0}^{n-1}a_i X^{\alpha^i}\right)=b X^{\alpha^k}+\sum\limits_{i=0}^{n-1}a_i X^{(\alpha^i+\alpha^k)} $$

We denote by $\Sigma$ the sequence $(1_G,\sigma_0,...\sigma_{n-1})$, where $1_G$ represents the identity automorphism in $\K G$. 
\end{definition}
\bigskip

Now, let $r=c+e$ be the received word with $c\in \C$ and $e$ the error vector, where we assume $\omega(e)\leq t$. We fix an information set for $\C$, $I\subseteq\{0,\alpha,\dots,\alpha^{n-1}\}$, and we denote as above
\begin{equation}\label{Iprima}
I'=\left\{
\begin{array}{lcl}
 I & \text{  if  } & 0\notin I\\
 I\setminus\{0\} &  & \text{otherwise}
\end{array}
\right.
\end{equation} 

Then the (modified) permutation decoding algorithm is:

{\scshape Algorithm I:}
	
	\begin{enumerate}
		\item Let H be a parity check matrix in
standard form with respect to I, and $P\subseteq \PAut(\C^*)\subseteq\PAut(\C)$ an $s$-PD-like set for $\C^*$ ($s\leq t$) with respect to $I'$, defined in (\ref{Iprima}).

	\item We take $r=1_G(r)$. For each $\tau\in P$ we compute $\tau(r)$. If we find $\tau_0\in P$ such that $\omega\left(H\cdot \tau_0(r)^T\right)\leq t$ then the information symbols of $\tau_0(r)$ are correct (Theorem \ref{PD}).
	
	In case there isn't any $\tau_0\in P$ satisfying the desired condition we go to Step 3, otherwise we go to Step 4.
	
	\item We compute $\sigma_i(r)$, where $\sigma_i$ is the next element in $\Sigma$, and we repeat Step 2 starting with $\sigma_i(r)$ instead of $r$.
	
		\item We recover $c'\in \C$ from the information symbols of $\tau_0(r)$.
		\item We decode to $(\sigma_i\circ\tau_0)^{-1}(c')=c$.
	
\end{enumerate}

\bigskip
The following result guarantees the functionality of the algorithm:

\begin{theorem}[Theorem 13,\cite{BS3}]\label{PDnew}
 Let $\C\subseteq \K G$ be an affine-invariant code with correction capability $t$. Let $I\subseteq \{0,\alpha^0,\dots,\alpha^{n-1}\}$ be an information set for $\C$. Let $P\subseteq \PAut(\C^*)\subseteq\PAut(\C)$ be an $s$-PD-like set for $\C^*$ with respect to $I'$ where $s\leq t$. Then, we can correct up to $s$ errors by using the previous algorithm.
\end{theorem}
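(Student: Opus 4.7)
The plan is to exhibit a pair $(\mu, \tau)$ with $\mu \in \Sigma$ and $\tau \in P$ such that the error in $(\tau \circ \mu)(r)$ is supported entirely outside the information set $I$; by Theorem~\ref{PD} this will give $\omega(H \cdot (\tau \circ \mu)(r)^T) \leq t$, so Algorithm~I must terminate successfully with this (or possibly an earlier) pair, and Steps~4--5 will then recover the sent codeword $c$.

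First I would analyze the action of the translations $\sigma_k$ on supports. From the definition one has $\sigma_k(X^g) = X^{g+\alpha^k}$, so $\supp(\sigma_k(e)) = \supp(e) + \alpha^k$, and in particular $0 \in \supp(\sigma_k(e))$ if and only if $-\alpha^k \in \supp(e)$. As $k$ runs over $\{0,\dots,n-1\}$ the elements $-\alpha^k$ exhaust $G^*$, so the number of translations $\sigma_k$ still placing an error at $0$ equals $|\supp(e) \cap G^*|$. Adding $1_G$, which is a bad choice precisely when $0 \in \supp(e)$, the total count of bad elements of $\Sigma$ is exactly $|\supp(e)| \leq s$, while $|\Sigma| = n+1 > s$; hence some $\mu \in \Sigma$ satisfies $0 \notin \supp(\mu(e))$. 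Set $e' = \mu(e)$, so $\supp(e') \subseteq G^*$ and $|\supp(e')| \leq s$.

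Next I would apply the hypothesis on $P$. Since $P$ is an $s$-PD-like set for $\C^*$ with respect to $I'$ and $\supp(e') \subseteq G^*$ consists of at most $s$ positions, there exists $\tau \in P$ with $\tau(\supp(e')) \cap I' = \emptyset$. Identifying $\tau$ with its extension to $S_G$ fixing $X^0$, one has $\supp(\tau(e')) \subseteq G^*$, so $0 \notin \supp(\tau(e'))$. Combined with $\tau(\supp(e')) \cap I' = \emptyset$ and the definition of $I'$ (which forces either $I = I'$ or $I = I' \cup \{0\}$), this yields $\supp(\tau(e')) \cap I = \emptyset$.

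To conclude, I would verify the hypothesis of Theorem~\ref{PD}. Setting $r' = (\tau \circ \mu)(r)$, the inclusions $\mu, \tau \in \PAut(\C)$ force $(\tau \circ \mu)(c) \in \C$, whence $H \cdot r'^T = H \cdot \tau(e')^T$. Because $H$ is in standard form with respect to $I$ and $\supp(\tau(e'))$ lies entirely in the redundancy positions $G \setminus I$, the syndrome coincides with $\tau(e')$ restricted to those positions and has weight at most $s \leq t$. Theorem~\ref{PD} then guarantees that the information symbols of $r'$ are correct, so the test in Step~2 succeeds on $(\mu, \tau)$ (or on an earlier pair tried by the algorithm), and Steps~4--5 output the unique codeword $c$. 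The main subtlety is that $P$ only controls behavior with respect to $I'$ rather than $I$; the translation step $\mu$ is introduced precisely to clear any error from position $X^0$, so that the $s$-PD-like property of $P$ can take over and eliminate the remaining errors from $I'$.
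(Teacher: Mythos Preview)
The paper does not supply its own proof of this theorem; it is quoted verbatim as Theorem~13 of \cite{BS3} and used as a black box. Your argument is correct and is essentially the expected proof: first use a translation $\mu\in\Sigma$ to push any error off the position $X^0$ (the counting argument showing that at most $|\supp(e)|\le s$ elements of $\Sigma$ fail is clean), then invoke the $s$-PD-like property of $P$ on the remaining support inside $G^*$ to clear $I'$, and finally observe that since $\tau$ fixes $X^0$ the error stays off $0$, hence off all of $I$, so Theorem~\ref{PD} applies.

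One small point worth tightening: the definition of $s$-PD-like set in the paper speaks of ``any $s$ positions,'' so when $|\supp(e')|<s$ you should remark that the support can be padded to a set of exactly $s$ elements of $G^*$ before applying the hypothesis. Also, your final paragraph is slightly over-specific in claiming the syndrome \emph{coincides} with the restriction of $\tau(e')$; all you need (and all Theorem~\ref{PD} requires) is the inequality $\omega(H\cdot\tau(e')^T)\le\omega(\tau(e'))\le s\le t$, which follows immediately once the error is supported on check positions. These are cosmetic; the logic is sound.
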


Therefore, to achieve our purposes we need to find $s$-PD-like sets for first-order GRM codes for a fixed information set.

\subsection{Permutation decoding for first-order GRM codes.}\label{PDRM}
Let $\C=R_q(1,m)$, for any $m\in\N$, $q$ a power of a prime number, and suppose that $n=q^m-1$ satisfies the required conditions, that is,
$$ n=r_1\cdot r_2,\qquad \gcd(r_1,r_2)=1 \qquad \text{ and } \qquad  r_1, r_2>1.
$$

Recall that $\alpha\in G^*$, an $n$-th primitive root of unity, and $\varphi$, an isomorphism $\varphi: \Z_n\longrightarrow \Z_{r_1}\times\Z_{r_2}$, have been fixed all throughout this paper.

By Corollary \ref{informationsetFisrtGRM} we can consider the information set for $\C$ given by
$$I=\{0,\alpha^i\mid i\in \varphi^{-1}(\Gamma)\}$$
where  $$\Gamma=\Gamma(\C)=\left\{(i_1,i_2)\in\Z_{r_1}\times\Z_{r_2} \tq 0\leq i_1< m, 0\leq i_2<1\right\}.$$

Now, let $T_\alpha\in S_{G^*}$ be defined by 
 \begin{equation}\label{talfa}
T_\alpha\left(b X^0+\sum_{i=0}^{n-1} a_i X^{\alpha^i}\right)=b X^0+\sum_{i=0}^{n-1} a_i X^{\alpha^{i+1}}\end{equation}
Since $\C$ is affine-invariant we have that $T_\alpha \in \PAut(\C^*)\subseteq\PAut(\C)$. Moreover, since $S_{G^*}\simeq S_{\Z_{r_1}\times\Z_{r_2}}$ (the group of automorphisms of $\Z_{r_1}\times\Z_{r_2}$) we have an isomorphism (induced by $\varphi$)
$$\phi:\langle T_\alpha \rangle\rightarrow \langle T_1,T_2\rangle $$
where $T_1(x,y)=(\overline{x+1},y)$ for any $(x,y)\in\Z_{r_1}\times\Z_{r_2}$, and $T_2$ is defined analogously.

We are proving that $\langle T_\alpha\rangle$ is an $s$-PD-like set for $\C^*$ with respect to $I$ for certain natural number $s$. First we need the following technical lemma proved in \cite{BS2}. We denote by $[\cdot]_r$ the remainder modulo $r$.

\begin{lemma}[\cite{BS2}]\label{juntarpuntos}
 Let $r, x_1,\dots,x_h\in \N$ where $0\leq x_1<x_2<\dots<x_h<r$. Then, there exists $\mu\in\N$ such that 
$$\left\lceil \frac{r}{h}\right\rceil-1\leq [x_i+\mu]_r<r$$
for all $i=1,\dots,h$ and $[x_j+\mu]_r=r-1$ for some $j\in \{1,\dots,h\}$.
\end{lemma}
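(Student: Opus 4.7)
The plan is to view the values $x_1,\dots,x_h$ as $h$ distinct points on the cyclic group $\Z_r$ and argue that a suitable translation packs them into the interval $[\lceil r/h\rceil-1,r-1]$. The rotation $\mu$ will be determined by the location of the \emph{largest} gap between consecutive points, where ``consecutive'' is understood cyclically on $\Z_r$.

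First I would introduce the $h$ cyclic gaps
\[
g_i = x_{i+1}-x_i \text{ for } i=1,\dots,h-1, \qquad g_h = r - x_h + x_1,
\]
and observe that $\sum_{i=1}^{h} g_i = r$. An averaging argument then forces $\max_i g_i\geq \lceil r/h\rceil$: otherwise every $g_i\leq \lceil r/h\rceil-1<r/h$, giving total strictly less than $r$, a contradiction. Let $j\in\{1,\dots,h\}$ be an index realising this maximum; denote the corresponding gap by $g_j\geq \lceil r/h\rceil$.

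Next I would set $\mu = r-1-x_j$ (interpreting $\mu$ modulo $r$, choosing the natural representative in $\{0,1,\dots,r-1\}$), so that $[x_j+\mu]_r = r-1$, giving the required index realising $r-1$. For the remaining points, observe that under the translation $x\mapsto [x+\mu]_r$ the cyclic order of the $x_i$'s is preserved, and the point immediately ``after'' $x_j$ in the cyclic order (that is, $x_{j+1}$, with indices taken mod $h$) is sent to
\[
[x_{j+1}+\mu]_r = g_j - 1 \geq \left\lceil \frac{r}{h}\right\rceil - 1.
\]
All other $x_i$ sit, in the cyclic order starting from $x_{j+1}$ and ending at $x_j$, between these two translated values, so their images in $\Z_r$ lie in the interval $[g_j-1,r-1]\subseteq [\lceil r/h\rceil-1,r-1]$, as required.

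The only mildly delicate point is verifying that the cyclic order is indeed preserved by a translation and that ``sitting between $x_{j+1}$ and $x_j$ cyclically'' really means their remainders modulo $r$ lie in the claimed interval; this is a direct unwinding of the definition of $[\cdot]_r$, splitting on whether $i>j$ or $i<j$ to handle the wrap-around, so no serious obstacle is expected. The heart of the argument is really just the pigeonhole bound on the maximum gap together with the choice of $\mu$ pinning $x_j$ to the top of the interval.
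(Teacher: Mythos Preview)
The paper does not supply its own proof of this lemma; it is quoted from \cite{BS2} and used as a black box, so there is no in-paper argument to compare against. That said, your proof is correct and is the natural one: the pigeonhole bound $\max_i g_i\ge\lceil r/h\rceil$ follows because $\lceil r/h\rceil-1<r/h$ in every case (whether or not $h\mid r$), and your choice $\mu=r-1-x_j$ does exactly what you claim. The verification that all other $[x_i+\mu]_r$ land in $[g_j-1,r-1]$ is routine: for $i>j$ one gets $[x_i+\mu]_r=x_i-x_j-1\ge g_j-1$, while for $i<j$ one gets $[x_i+\mu]_r=r-1-(x_j-x_i)\ge g_j-1$ since $x_{j+1}-x_i<r$; the wrap-around case $j=h$ is handled the same way. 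Your parenthetical about reducing $\mu$ modulo $r$ is in fact unnecessary, since $0\le x_j\le r-1$ already forces $\mu=r-1-x_j\in\{0,\dots,r-1\}$.
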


Next result gives us the desired PD-like set. Moreover, in Section \ref{probability} we will discuss its proof in order to study the possibility of taking a smaller PD-like set.

\begin{theorem}[\cite{BS3}]\label{sPDlike}
The group generated by $T_\alpha$ is an $s$-PD-like set for $\C^*$ with respect to $I'$ (defined from $I$ as in (\ref{Iprima})) where
$$s=(\lambda_0+1) r_2-1$$
and $\displaystyle \lambda_0=\max\left\{\lambda\mid m<\left\lceil \frac{r_1}{\lambda}\right\rceil\right\}$.
\end{theorem}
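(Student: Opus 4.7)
The plan is to transport the problem through the isomorphism $\phi$ to $\Z_{r_1}\times\Z_{r_2}$, where $I'$ becomes the ``strip'' $\Gamma=\{(i_1,0)\mid 0\le i_1<m\}$ and each position $\alpha^k\in G^*$ becomes a pair $(x,y)\in\Z_{r_1}\times\Z_{r_2}$. Because $\gcd(r_1,r_2)=1$, the product group is cyclic of order $n$, so the cyclic group $\langle T_\alpha\rangle$ is identified by $\phi$ with the \emph{full} translation group $\langle T_1,T_2\rangle$; every shift $(\mu_1,\mu_2)$ is therefore realizable as a power of $T_\alpha$. Hence it suffices to show that, given any $s=(\lambda_0+1)r_2-1$ points $(x_i,y_i)\in\Z_{r_1}\times\Z_{r_2}$, one can find $(\mu_1,\mu_2)$ such that $(x_i+\mu_1,y_i+\mu_2)\notin\Gamma$ for all $i$.

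First I would choose $\mu_2$ by pigeonhole applied to the second coordinates. The $s$ values $y_1,\dots,y_s$ distribute among $r_2$ classes in $\Z_{r_2}$, so some $\mu_2$ leaves at most $h\le\lfloor s/r_2\rfloor$ of them satisfying $y_i+\mu_2=0$; with $s=(\lambda_0+1)r_2-1$ one has $s/r_2=\lambda_0+1-1/r_2$ and hence $h\le\lambda_0$. For every index $i$ with $y_i+\mu_2\ne 0$ the shifted point already lies outside $\Gamma$ regardless of $\mu_1$, so only the $h$ remaining ``bad'' points, with first coordinates $x_{i_1},\dots,x_{i_h}$, still need to be dealt with.

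It then remains to choose $\mu_1\in\Z_{r_1}$ that translates these $h$ first-coordinates out of $\{0,1,\dots,m-1\}$. This is exactly the situation of Lemma \ref{juntarpuntos}, which supplies $\mu_1$ with $\lceil r_1/h\rceil-1\le[x_{i_j}+\mu_1]_{r_1}<r_1$ for every $j$. Since $h\le\lambda_0$, the defining property of $\lambda_0$ gives $\lceil r_1/h\rceil\ge\lceil r_1/\lambda_0\rceil>m$, so each shifted $x$-coordinate lands in $\{m,\dots,r_1-1\}$ and the point escapes $\Gamma$. Pulling $(\mu_1,\mu_2)$ back through $\phi^{-1}$ produces the required power of $T_\alpha$. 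The only mildly delicate point I anticipate is the tight fit between the value $s=(\lambda_0+1)r_2-1$ and the bound $h\le\lambda_0$: any off-by-one in the pigeonhole step or in the definition of $\lambda_0$ would break the chain, but the arithmetic $s/r_2=\lambda_0+1-1/r_2$ lines up precisely.
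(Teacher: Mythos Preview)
Your proof is correct and follows essentially the same approach as the paper: transport via $\phi$ to $\Z_{r_1}\times\Z_{r_2}$, use a pigeonhole on the second coordinate to isolate a fiber of size at most $\lambda_0$, then invoke Lemma~\ref{juntarpuntos} on the first coordinates of that fiber. Your single pigeonhole step cleanly merges the paper's two cases ($\pi_2(B')\subsetneq\Z_{r_2}$ versus $\pi_2(B')=\Z_{r_2}$), but this is a cosmetic streamlining rather than a different route; just be sure to note that the case $h=0$ is trivial (any $\mu_1$ works), since Lemma~\ref{juntarpuntos} is stated for $h\ge 1$.
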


\begin{proof}
  Let us consider $B\subseteq \{\alpha^i\}_{i=0}^{n-1}$, with $\left|B \right|=s$, and define $B'=\{\varphi(i):\alpha^i \in B\}\subseteq \Z_{r_1}\times\Z_{r_2}$, where $\varphi$ is the isomorphism fixed previously.  
	
	Then, we claim that there exists $\tau\in \langle T_1, T_2\rangle$ such that $\tau(B')\cap \Gamma=\emptyset$.
	
	Let us consider $\pi_i(B')$ the projection onto $\Z_{r_i}$ for $i=1,2$. It is clear that $B'=\bigcup\limits_{j\in \pi_2(B')} B_j$ (a disjoint union), where $$B_j=\{b\in B'\mid \pi_2(b)=j\},$$ for each $j\in \pi_2(B')$. 
	
	First, suppose that $\pi_2(B')\subsetneq \Z_{r_2}$. Then we have that there exists $\delta\in \N$ such that $\pi_2\left(T_2^\delta(B')\right)\subseteq \{1,\dots,r_2-1\}$ so $T_2^\delta(B')\cap \Gamma=\emptyset$ and we are done.
	
	Now, suppose that $\pi_2(B')= \Z_{r_2}$. If $\left|B_{j}\right|> \lambda_0$ for all $j\in\pi_2(B')$ then 
	 $$(\lambda_0+1) r_2-1=s=\left|B'\right|=\sum\limits_{j\in\pi_2(B')}\left|B_j\right|\geq (\lambda_0+1)\cdot r_2,$$
a contradiction. So, we conclude that there exists $j_0\in \pi_2(B')$ such that $\left|B_{j_0}\right|\leq \lambda_0$.  

  Then, let $\delta\in \N$ be such that $\pi_2\left(T_2^\delta(B_{j_0})\right)=0$ (and $\pi_2\left(T_2^\delta(B_{j})\right)\neq 0$ for any $j\neq j_0, j\in \Z_{r_2}$).
	
	By Lemma \ref{juntarpuntos}, applied to $r=r_1$ and the elements of $\pi_1\left(B_{j_0}\right)$, we have that there exists $\mu$ such that $\pi_1\left(T_1^\mu(B_{j_0})\right)\subseteq \{m,\dots, r_1-1\}$. So we conclude that
	$$T_1^\mu T_2^\delta(B')\cap\Gamma=\emptyset$$
	
	which finishes the proof.
\end{proof}

Finally, next theorem yields what we wanted.

\begin{theorem}\label{MainResult}
 Let $R_q(1,m)$ be the first-order Generalized Reed-Muller code of length $q^m$, ($m\in \N, q$ a power of a prime number). If $n=q^m-1=r_1\cdot r_2$, with $gcd(r_1,r_2)=1, r_1,r_2>1$, then we can correct up to $s$ errors by using Algorithm I with respect to the information set given in (\ref{infosetfirstorder}), where
 $$s=(\lambda_0+1)\cdot r_2-1$$
and $\lambda_0=max\{\lambda\mid m<\left\lceil \frac{r_1}{\lambda}\right\rceil\}$.
\end{theorem}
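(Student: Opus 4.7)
The plan is to obtain Theorem \ref{MainResult} as a direct assembly of the three main ingredients developed in the preceding sections: the explicit information set of Corollary \ref{informationsetFisrtGRM}, the cyclic PD-like set of Theorem \ref{sPDlike}, and the correctness guarantee for Algorithm I given by Theorem \ref{PDnew}. Because the hypothesis $n = q^m - 1 = r_1 \cdot r_2$ with $\gcd(r_1,r_2)=1$ and $r_1, r_2 > 1$ is exactly condition (\ref{condFirstOrder}), every prior result we want to chain applies with no extra checking on the decomposition of $n$.

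First I would invoke Corollary \ref{informationsetFisrtGRM} applied to $\C = R_q(1,m)$: this provides the information set $I = \{0, \alpha^i \mid i \in \varphi^{-1}(\Gamma)\}$ with $\Gamma$ as in (\ref{infosetfirstorder}), together with the fact that $0 \in I$, so that the associated set $I'$ defined by (\ref{Iprima}) equals $I \setminus \{0\} \subseteq G^*$. Next, Theorem \ref{sPDlike} produces the cyclic group $\langle T_\alpha\rangle \subseteq \PAut(\C^*)$ as an $s$-PD-like set for $\C^*$ with respect to $I'$, where $s = (\lambda_0 + 1) r_2 - 1$ and $\lambda_0 = \max\{\lambda \mid m < \lceil r_1/\lambda \rceil\}$. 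At this point the full input required by Algorithm I from Subsection \ref{PDRM} is in place: a fixed information set, a parity check matrix in standard form with respect to it, and an explicit PD-like set inside $\PAut(\C^*) \subseteq \PAut(\C)$.

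Finally I would apply Theorem \ref{PDnew} with $\C$, $I$, and $P = \langle T_\alpha\rangle$. That theorem certifies that Algorithm I, executed with this data, recovers $c$ correctly from any received word $r = c + e$ with $\omega(e) \leq s$, which is precisely the conclusion of Theorem \ref{MainResult}. The only delicate point I would flag explicitly is the compatibility hypothesis $s \leq t$ in Theorem \ref{PDnew}, where $t = \lfloor(q^{m-1}(q-1)-1)/2\rfloor$ is the error correction capability of $R_q(1,m)$: the statement is meaningful in the regime where $s \leq t$, and in particular in the parameter ranges exhibited in Section \ref{Examples} this inequality holds and must be noted; outside that regime the claim should be read as the capacity to correct up to $\min(s,t)$ errors. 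No other step presents a genuine obstacle, since all the combinatorial content (information-set construction via the decomposition $\Z_n \cong \Z_{r_1}\times\Z_{r_2}$, and the ``move out of the window'' argument resting on Lemma \ref{juntarpuntos}) has already been settled upstream.
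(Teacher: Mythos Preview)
Your proposal is correct and follows essentially the same route as the paper: invoke Theorem \ref{sPDlike} to obtain $\langle T_\alpha\rangle$ as an $s$-PD-like set for the information set (\ref{infosetfirstorder}), and then conclude via Theorem \ref{PDnew} that Algorithm I succeeds. Your explicit flag on the constraint $s\leq t$ is a worthwhile addition that the paper's two-line proof leaves implicit.
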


\begin{proof}
 By Theorem \ref{sPDlike} we have that the group generated by $T_\alpha$ is an $s$-PD-like set when we take as information set that given in (\ref{infosetfirstorder}), so we can apply the Algorithm I successfully.
\end{proof}

\section{Examples}\label{Examples}

In this section we are applying Theorem \ref{MainResult} to obtain some examples of the achieved error correction capability for some values of $q$ and $m$. Then we compare them with the references in the literature about the codes $R_q(1,m)$; they are the following:

\begin{itemize}
	\item In \cite{KMM2} the authors mention that the group of translations in $\K^m$ is an $s$-PD-set for $R_q(v,m)$ where 
	$$s=\min\left\{\left\lfloor\frac{q^m-1}{f_{v,m,q}} \right\rfloor, \left\lfloor\frac{d_{v,m,q}-1}{2}\right\rfloor\right\}$$
	and $f_{v,m,q}, d_{v,m,q}$ represent the dimension and the minimum distance  of $R_q(v,m)$ respectively.
	
	In the context of this paper, we are interested only in the case of first-order GRM codes. It is well-known that for $R_q(1,m)$ one has $f_{v,m,q}=m+1$ and $d_{v,m,q}=q^{m-1}\cdot (q-1)$. Then, we denote:
	$$\rho_1=\frac{q^m-1}{m+1},\qquad \rho_2=\frac{q^{m-1}\cdot(q-1)-1}{2}$$
	
	\item In \cite{KMM3} the authors explain that for the codes $R_q(1,m)$, $s$-PD-sets of size $s+1$ exist for 
	$$1\leq s\leq \left\lfloor\frac{q^m}{m+1}\right\rfloor-1.$$
	Let us denote $\displaystyle \rho_3=\frac{q^m}{m+1}$.
\end{itemize}

Next proposition shows that we have to pay attention to the value $\rho_1$ uniquely.

\begin{proposition}
 For any $q, m\geq 3$ the following hold:
\begin{enumerate}
	\item $\displaystyle \left\lfloor\rho_3 \right\rfloor=\left\lfloor\rho_1 \right\rfloor$ or $\displaystyle \left\lfloor\rho_1 \right\rfloor+1$.
	\bigskip
	\item $\displaystyle \rho_2 \geq \rho_1 $.
\end{enumerate}
\end{proposition}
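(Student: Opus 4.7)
The plan is that both parts reduce to elementary arithmetic, so there is no serious obstacle; the whole proof should fit in a few lines once the correct rewriting is chosen.

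For (1), I will simply compute the difference directly:
\[
\rho_3-\rho_1=\frac{q^m}{m+1}-\frac{q^m-1}{m+1}=\frac{1}{m+1}.
\]
Since $m\geq 3$ we have $0<\frac{1}{m+1}<1$, and for any real $x$ and any $\epsilon\in(0,1)$ the value $\lfloor x+\epsilon\rfloor$ equals either $\lfloor x\rfloor$ or $\lfloor x\rfloor+1$. Applying this with $x=\rho_1$ and $\epsilon=\frac{1}{m+1}$ gives the claim.

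For (2), I would clear denominators: the inequality $\rho_2\geq\rho_1$ is equivalent to
\[
(m+1)\bigl(q^{m-1}(q-1)-1\bigr)\;\geq\;2(q^m-1).
\]
Expanding $(m+1)q^{m-1}(q-1)=(m+1)q^m-(m+1)q^{m-1}$ and moving everything to one side, this reduces to
\[
q^{m-1}\bigl((m-1)q-(m+1)\bigr)\;\geq\;m-1.
\]
The only point that requires a (tiny) verification is that the bracketed factor $(m-1)q-(m+1)$ is at least $m-1$ under the hypotheses $m,q\geq 3$. This inequality rearranges to $q\geq 2+\tfrac{2}{m-1}$, which holds with equality exactly when $m=q=3$ and strictly otherwise. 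Combined with $q^{m-1}\geq 1$, this proves the inequality.

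The only mildly delicate point in the whole argument is the boundary case $m=q=3$, where the factor $(m-1)q-(m+1)$ equals $m-1$ exactly; the inequality is still satisfied (non-strictly), so nothing special is needed. I expect no real obstacles: part (1) is a one-line floor-function estimate and part (2) is a polynomial inequality that collapses to a comparison of two small numerical quantities after factoring out $q^{m-1}$.
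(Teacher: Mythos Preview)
Your proof is correct. Part~(1) is essentially identical to the paper's argument: both observe that $\rho_3-\rho_1=\tfrac{1}{m+1}\in(0,1)$ and deduce the floor relation, the only difference being that the paper splits explicitly into the cases $\rho_3\in\Z$ and $\rho_3\notin\Z$ while you invoke the general floor-function fact directly.

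Part~(2), however, is handled quite differently. The paper rewrites the inequality as $\tfrac{q^m-1}{q^{m-1}}\geq\tfrac{m+1}{m-1}$ and then argues via calculus: it shows that $f(x)=\tfrac{q^x-1}{q^{x-1}}$ is increasing, $g(x)=\tfrac{x+1}{x-1}$ is decreasing, and finally checks $f(3)\geq g(3)$ for all $q\geq 3$ using yet another monotonicity argument for $h(x)=\tfrac{x^3-1}{x^2}$. Your route --- clearing denominators and reducing to $q^{m-1}\bigl((m-1)q-(m+1)\bigr)\geq m-1$, then verifying the linear condition $q\geq 2+\tfrac{2}{m-1}$ --- is purely algebraic, shorter, and avoids derivatives entirely. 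The paper's approach makes the monotone dependence on $m$ and $q$ explicit, which is conceptually pleasant, but your factorization gets to the boundary case $m=q=3$ more transparently and with less machinery.
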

\begin{proof}
\begin{enumerate}
	\item In case $\displaystyle \rho_3=\frac{q^m}{m+1}\in\Z$ we have 
	$$\left\lfloor\frac{q^m-1}{m+1} \right\rfloor=\left\lfloor\frac{q^m}{m+1}-\frac{1}{m+1} \right\rfloor=\frac{q^m}{m+1}-1=\left\lfloor\frac{q^m}{m+1}\right\rfloor-1=\left\lfloor\rho_3\right\rfloor-1.$$
	In case $\displaystyle \rho_3=\frac{q^m}{m+1}\notin\Z$ we have
	$$\left\lfloor\frac{q^m-1}{m+1} \right\rfloor=\left\lfloor\frac{q^m}{m+1}-\frac{1}{m+1} \right\rfloor=\left\lfloor\frac{q^m}{m+1}\right\rfloor=\left\lfloor\rho_3\right\rfloor.$$
	\item Using the notation introduced above we have 
	$$\rho_2=\frac{q^m-1}{2}-\frac{q^{m-1}}{2}=\rho_1\cdot\frac{m+1}{2}-\frac{q^{m-1}}{2}.$$
	So, $\rho_2\geq\rho_1$ if and only if 
	$$\rho_1\frac{m+1}{2}-\frac{q^{m-1}}{2}\geq\rho_1\Leftrightarrow \rho_1\left(\frac{m+1}{2}-1\right)\geq \frac{q^{m-1}}{2}\Leftrightarrow$$
	$$\rho_1\frac{m-1}{2}\geq\frac{q^{m-1}}{2}\Leftrightarrow \rho_1\geq \frac{q^{m-1}}{m-1}$$
	which is equivalent to
	$$\frac{q^m-1}{q^{m-1}}\geq\frac{m+1}{m-1}.$$
	
	Now, let us consider the real functions $\displaystyle f(x)=\frac{q^x-1}{q^{x-1}},\ g(x)=\frac{x+1}{x-1}$ defined in its corresponding domains. On the one hand, we have that the first derivative function of $f(x)$ is
	$$f'(x)=\frac{q^{x-1}\cdot\ln q}{q^{2(x-1)}},\text{ for any real number } x$$
	It is clear that $f'(x)\geq 0$ for any real number $x$ (and any value of $q$, a power of a prime number) so the function $f(x)$ is an increasing one.
	
	On the other hand,
	$$g'(x)=\frac{-2}{(x-1)^2}<0, \text{ for any real number } x\neq 1$$
	so $g(x)$ is a decreasing function. 
	
	Finally, we want to see that $f(3)\geq g(3)$ for any value of $q\geq 3$. We have that $\displaystyle f(3)=\frac{q^3-1}{q^2}$ and $g(3)=2$, so we consider the real function $\displaystyle h(x)=\frac{x^3-1}{x^2}$ defined for any real number $x\neq 0$. First, $h(3)=26/9>2$, and second, it is easy to see that $h(x)$ is an increasing function in the interval $[3,+\infty)$; so we have what we wanted.
	
	This finishes the proof.
	
	\end{enumerate}
\end{proof}

\begin{corollary}
For first-order GRM codes $R_q(1,m), (q>2)$ the best bound for the error correction capability between $\{\rho_1,\rho_2,\rho_3\}$ will always be 
$$\rho_1=\left\lfloor\frac{q^m-1}{m+1}\right\rfloor.$$
\end{corollary}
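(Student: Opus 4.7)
The plan is to observe that the corollary is an immediate bookkeeping consequence of the preceding proposition combined with the precise form in which the bounds from \cite{KMM2} and \cite{KMM3} are stated, so no new analytic input is needed. The whole argument amounts to comparing three integers: $\lfloor\rho_1\rfloor$, $\min\{\lfloor\rho_1\rfloor,\lfloor\rho_2\rfloor\}$ (the bound from \cite{KMM2}), and $\lfloor\rho_3\rfloor-1$ (the bound from \cite{KMM3}).

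First I would handle the $\rho_2$ side. Item (2) of the proposition gives $\rho_2\geq \rho_1$, and the floor function is monotone, so $\lfloor\rho_2\rfloor\geq \lfloor\rho_1\rfloor$. Consequently the KMM2 expression collapses:
\begin{equation*}
\min\left\{\left\lfloor\rho_1\right\rfloor,\ \left\lfloor\rho_2\right\rfloor\right\}=\left\lfloor\rho_1\right\rfloor=\left\lfloor\frac{q^m-1}{m+1}\right\rfloor.
\end{equation*}
So the \cite{KMM2} bound coincides with $\lfloor\rho_1\rfloor$, and certainly no bound from $\rho_2$ strictly exceeds it.

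Next I would dispose of $\rho_3$. The \cite{KMM3} result allows $s$ up to $\lfloor\rho_3\rfloor-1$, not $\lfloor\rho_3\rfloor$ itself, so I must apply item (1) of the proposition with this shift in mind. Item (1) states $\lfloor\rho_3\rfloor\in\{\lfloor\rho_1\rfloor,\lfloor\rho_1\rfloor+1\}$, which rearranges to $\lfloor\rho_3\rfloor-1\leq \lfloor\rho_1\rfloor$. Hence the KMM3 bound is also dominated by $\lfloor\rho_1\rfloor$.

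Putting the two inequalities together, every candidate from $\{\rho_1,\rho_2,\rho_3\}$ produces a correction capability at most $\lfloor\rho_1\rfloor$, and this value is actually attained (by the \cite{KMM2} bound), so $\lfloor\rho_1\rfloor$ is the best among the three. There is no real obstacle here; the only mild care-point is remembering the $-1$ shift in the \cite{KMM3} bound so that item (1) of the proposition is applied in the correct direction rather than being read as saying that $\rho_3$ is strictly larger than $\rho_1$.
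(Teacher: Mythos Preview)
Your proof is correct and is exactly the argument the paper has in mind: the corollary is stated in the paper without proof because it follows immediately from the two items of the preceding proposition, and you have spelled out precisely that deduction, including the crucial observation that the \cite{KMM3} bound is $\lfloor\rho_3\rfloor-1$ rather than $\lfloor\rho_3\rfloor$, which is what makes item (1) point in the right direction.
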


Therefore, we have to compare the values of $s$ obtained from Theorem \ref{MainResult} with $\rho_1$ uniquely. 

In Table \ref{TablaRes3}, Table \ref{TablaRes4} and Table \ref{TablaRes5} we include examples, for some values of $m$, for the cases $q=3, q=4$ and $q=5$ respectively. Although for certain values of $m$ (most of them) we can find several decompositions $n=r_1\cdot r_2$ we only show that yield the best error-correction capability. In column $s$ we include the value obtained from Theorem \ref{MainResult}.

\begin{table}[h]
\caption{Number of corrected errors for $R_3(1,m)$}
\label{TablaRes3}
\centering

\begin{tabular}{|c|c|c|c|c|}
 \hline &&&&\\
 $m$ & $r_1$ & $r_2$ &$\rho_1$& $s$\\
&&&&\\ \hline
3&13&2&6&9\\ \hline
4&5&16&16&31\\ \hline
5&121&2&40&49\\ \hline
6&7&104&207&104\\ \hline
7&1093&2&273&313\\ \hline
8&41&160&728&959\\ \hline
9&757&26&1968&2209\\ \hline
10&61&968&5368&6775\\ \hline
\end{tabular}
\end{table}

\begin{table}[h]
\caption{Number of corrected errors for $R_4(1,m)$}
\label{TablaRes4}
\centering

\begin{tabular}{|c|c|c|c|c|}
 \hline &&&&\\
 $m$ & $r_1$ & $r_2$ &$\rho_1$& $s$\\
&&&&\\ \hline
2&5&3&5&8\\ \hline
3&7&9&15&26\\ \hline
4&17&15&54&74\\ \hline
5&11&93&170&278\\ \hline
6&13&315&585&944\\ \hline
7&43&381&2047&2666\\ \hline
8&257&255&7281&8414\\ \hline
9&19&13797&26214&41390\\ \hline
10&41&25575&95325&127874\\ \hline
\end{tabular}
\end{table}

\begin{table}[h]
\caption{Number of corrected errors for $R_5(1,m)$}
\label{TablaRes5}
\centering

\begin{tabular}{|c|c|c|c|c|}
 \hline &&&&\\
 $m$ & $r_1$ & $r_2$ &$\rho_1$& $s$\\
&&&&\\ \hline
2&3&8&8&15\\ \hline
3&31&4&31&43\\ \hline
4&13&48&124&191\\ \hline
5&11&284&520&851\\ \hline
6&7&2232&2232&4463\\ \hline
7&19531&4&9765&11163\\ \hline
8&313&1284&43402&49919\\ \hline
9&19&102796&195312&308387\\ \hline
10&33&295928&887784&1183711\\ \hline
\end{tabular}
\end{table}

\section{A probabilistic approach to the problem of finding smaller PD-like sets}\label{probability}

The arguments in the proof of Theorem \ref{sPDlike} show that in many cases we will be able to correct by using a notably smaller s-PDlike set. In this section we study, from a probabilistic point of view, when this occurs.

Let us denote $r=c+e$ the received word, where $c\in\C$ and $e$  
 the error vector. We write $\supp(e)$ to denote the positions that correspond to the non-zero coefficients of $e$ in its polynomial expression (see (\ref{polynomial})). If $0\in\supp(e)$ then we can not correct by using the PD-like set given in Section \ref{NewPD}. However, the structure of Algorithm I assures that by applying sequently the elements of $\Sigma$ (see Step 3), we achieve a situation where $0$ does not belong to the suppport of the modified error. So, let us assume that $0\notin\supp(e)$.

Recall that $n=r_1\cdot r_2$ and we have fixed $\alpha\in G^*$, an $n$-th primitive root of unity, and $\varphi$, an isomorphism $\varphi: \Z_n\longrightarrow \Z_{r_1}\times\Z_{r_2}$. For any $\alpha^i\in B$, we write $\pi_j(\varphi(i))$ to denote the projection of $\varphi(i)$ onto $\Z_{r_j}$, for $j=1,2$. We want to correct up to $s$-errors, where $s=(\lambda_0+1)\cdot r_2-1$ as it was given in Theorem \ref{sPDlike}.

If we analyze the proof of the mentioned result we can see that in case 
\begin{equation}\label{condicionT2}\{\pi_2(\varphi(i))\}_{\alpha^i\in B}\subsetneq\Z_{r_2}\end{equation}
we are able to move the errors out of the information set $I$ by using the group $\langle T_2\rangle$ uniquely instead of the entire group $\langle T_1,T_2,\rangle$. So, as a first consequence, we have that in order to apply Algorithm I as much efficiently as possible we take the elements in the group $\langle T_1,T_2\rangle$ by starting with the elements of $\langle T_2\rangle$.

Now, given a random error vector, we deal with the problem of determining the probability that it verifies condition (\ref{condicionT2}). To compute it, let us denote $A=\{S\subseteq \Z_{r_1}\times\Z_{r_2}\mid\ |S|=s, \text{ and } \pi_2(S)\subsetneq \Z_{r_2}\}$, and for any $i\in\Z_{r_2}=\{0,1,\dots,r_2-1\}$, we write $A_i=\{S\subseteq \Z_{r_1}\times\Z_{r_2}\mid\ |S|=s, \text{ and } i\notin\pi_2(S)\}$. Then, it is clear that $A=\bigcup\limits_{i=0}^{r_2-1} A_i$.

Next theorem shows a formula to calculate the desired probability. Recall that $\phi:\langle T_\alpha \rangle\rightarrow \langle T_1,T_2\rangle $ is the isomorphism induced by $\varphi$.

\begin{theorem}
In order to apply Algorithm I, the group $\phi^{-1}\left(\langle T_2\rangle\right)$ acts as an $s$-PD-like set with probability 
$${\mathbf p}=\frac{\left|A\right|}{{n\choose s}}=\frac{1}{{n\choose s}}\cdot \sum\limits_{j=1}^{r_2-\delta}(-1)^{j+1}\cdot\left(\sum\limits_{\{i_1,\dots,i_j\}\subseteq\Z_{r_2}}\left|A_{i_1}\cap\cdots\cap A_{i_j}\right|\right)$$
where $\displaystyle\delta=\left\lceil \frac{s}{r_1}\right\rceil$.
\end{theorem}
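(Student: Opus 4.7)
The plan is to treat the probability space as the set of all size-$s$ subsets of $G^*$, equipped with the uniform distribution, and to transfer the computation to $\Z_{r_1} \times \Z_{r_2}$ via the fixed bijection $\varphi$. Since $|G^*| = n = r_1 r_2$, there are exactly $\binom{n}{s}$ possible error supports, so the probability reduces to $|A|/\binom{n}{s}$. The link between $A$ and the algorithm is read off from the proof of Theorem \ref{sPDlike}: whenever the image $S = \{\varphi(i) : \alpha^i \in \supp(e)\}$ satisfies $\pi_2(S) \subsetneq \Z_{r_2}$, there is a power of $T_2$ that moves all errors out of $\Gamma$; thus $\phi^{-1}(\langle T_2 \rangle)$ behaves as an $s$-PD-like set precisely when $S \in A$.

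Next I would expand $|A|$ by inclusion-exclusion on the decomposition $A = \bigcup_{i=0}^{r_2-1} A_i$, obtaining
\[
|A| = \sum_{j\geq 1} (-1)^{j+1} \sum_{\{i_1,\ldots,i_j\}\subseteq\Z_{r_2}} |A_{i_1}\cap\cdots\cap A_{i_j}|.
\]
What remains is to justify that this summation index can be truncated at $j = r_2 - \delta$, which is the only non-routine point of the argument.

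The main step is to show that $A_{i_1}\cap\cdots\cap A_{i_j} = \emptyset$ whenever $j > r_2 - \delta$, and this is a pigeon-hole argument. Any $S \in A_{i_1}\cap\cdots\cap A_{i_j}$ must satisfy $\pi_2(S) \subseteq \Z_{r_2}\setminus\{i_1,\ldots,i_j\}$, hence $S$ is contained in $\Z_{r_1} \times (\Z_{r_2}\setminus\{i_1,\ldots,i_j\})$, a set of cardinality $r_1(r_2 - j)$. For the size-$s$ set $S$ to fit one therefore needs $s \leq r_1(r_2 - j)$, equivalently $j \leq r_2 - s/r_1$. Since $j$ is an integer and $r_2$ is an integer, this forces $j \leq r_2 - \lceil s/r_1\rceil = r_2 - \delta$. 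Substituting the resulting truncated inclusion-exclusion expansion of $|A|$ into the ratio $|A|/\binom{n}{s}$ yields the formula in the statement. The computation is essentially a bookkeeping exercise once the bijection with $\Z_{r_1}\times\Z_{r_2}$ and the pigeon-hole bound are in place; I do not anticipate any further obstacle.
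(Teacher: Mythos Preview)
Your proposal is correct and follows essentially the same route as the paper: identify the favourable event with $\pi_2(S)\subsetneq\Z_{r_2}$ via the proof of Theorem~\ref{sPDlike}, apply inclusion--exclusion to $A=\bigcup_i A_i$, and truncate the sum using the pigeonhole bound $s\le r_1(r_2-j)$. Your justification of the truncation is in fact more explicit than the paper's, which only sketches the observation that each fibre $\pi_2^{-1}(i)$ has size $r_1$.
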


\begin{proof}
 Let  $B\subseteq \{\alpha^i\}_{i=0}^{n-1}$, with $\left|B \right|=s$, and define $B'=\{\varphi(i):\alpha^i \in B\}\subseteq \Z_{r_1}\times\Z_{r_2}$. The proof of Theorem \ref{sPDlike} shows that the group generated by $T_2$ acts as an $s$-PD-like set when $\pi_2(B')\subsetneq\Z_{r_2}$. 

So, given a random subset in $\Z_{r_1}\times\Z_{r_2}$, with $s$ elements, we want to calculate the probability $\mathbf p$ of it belonging to $A$. It is clear that the number of subsets of cardinality $s$ in $\Z_{r_1}\times\Z_{r_2}$ is ${n\choose s}$ and then
${\mathbf p}=\frac{\left|A\right|}{{n\choose s}}$.

Now, from the equality $A=\bigcup\limits_{i=0}^{r_2-1} A_i$ we have that

$$|A|=\sum\limits_{j=1}^{r_2-1}(-1)^{j+1}\cdot\left(\sum\limits_{\{i_1,\dots,i_j\}\subseteq Z_{r_2}}|A_{i_1}\cap\cdots\cap A_{i_j}|\right)$$

On the other hand, we have that, since any subset contained in $A$ has cardinality $s$, some intersections in the formula above has cardinality zero. Specifically, if $\displaystyle\delta=\left\lceil \frac{s}{r_1}\right\rceil$ then 
$$|A_{i_1}\cap\cdots\cap A_{i_j}|=0$$
for any $j\in\{r_2-\delta+1,\dots,r_2-1\}$. Observe that given $i\in\Z_{r_2}$ we have that $|\{b\in B\mid \pi_2(\varphi(b))=i\}|\leq r_1$ This finishes the proof.

\end{proof}

We have applied the previous result in order to get some explicit values of the probability $\mathbf p$. In Table \ref{TablaProb} we include the most relevant ones for the corresponding values of $q, m, r_1, r_2$ and $s$.

\begin{table}[h]
\caption{Probability for some values of $q$ and $m$}\label{TablaProb}

\begin{tabular}{|c|c|c|c|c|c|}
\hline
$q$&$m$&$r_1$&$r_2$&$s$&$\mathbf p$\\ \hline
$3$&$6$&$7$&$104$&$207$&$\simeq 0.9999$\\ \hline
$\cdot\cdot$&$8$&$32$&$205$&$819$&$\simeq 0.9482$\\ \hline
$5$&$5$&$11$&$284$&$851$&$\simeq 0.9999$\\ \hline
$\cdot\cdot$&$6$&$7$&$2232$&$4463$&$\simeq 1$\\ \hline
$4$&$5$&$11$&$93$&$278$&$\simeq 0.9516$\\ \hline
$\cdot\cdot$&$6$&$13$&$315$&$944$&$\simeq 0.9999$\\ \hline

\end{tabular}
\end{table}

As this table shows we can expect that in so many cases we will able to use as $s$-PD-like set the group $\phi^{-1}\left(\langle T_2\rangle\right)$ instead of the entire group of translations.

\end{document}